\newtheorem{Lemma}{Lemma}
\newtheorem{lemma}[Lemma]{$\mathbf{Lemma}$}
\newcounter{problem}
\newcounter{save@equation}
\newcounter{save@problem}
\newenvironment{problem}
{\setcounter{problem}{\value{save@problem}}%
  \setcounter{save@equation}{\value{equation}}%
  \let\c@equation\c@problem
  \subequations
}
{\endsubequations
  \setcounter{save@problem}{\value{equation}}%
  \setcounter{equation}{\value{save@equation}}%
}
\begin{document}%%
\title{\vspace{-1em} \huge{  NOMA-Based Coexistence  of Near-Field and Far-Field Massive MIMO Communications  }}

\author{\author{ Zhiguo Ding, \IEEEmembership{Fellow, IEEE},   Robert Schober, \IEEEmembership{Fellow, IEEE}, and H. Vincent Poor, \IEEEmembership{Life Fellow, IEEE}   \thanks{ 
  
\vspace{-2em}

  Z. Ding is with  Khalifa University, Abu Dhabi, UAE, and  University of Manchester, Manchester, M1 9BB, UK.   R. Schober is with  
Friedrich-Alexander-University Erlangen-Nurnberg (FAU), Nuremberg, 91054, Germany. H. V. Poor is  with   Princeton University, Princeton, NJ 08544,
USA. 

Z. Ding's work was  supported by the UK EPSRC under grant number EP/W034522/1and H2020 H2020-MSCA-RISE-2020 under grant number 101006411.  R. Schober’s work was (partly) funded by the German Research Foundation (DFG) under project SFB 1483 (Project-ID 442419336 Empkins) and the BMBF under the program of ``Souverän. Digital. Vernetzt.'' joint project 6G-RIC (Project-ID 16KISK023). H. V. Poor's work was supported by the U.S National Science Foundation under Grant CNS-2128448. }
 
 \vspace{-2em}

  }\vspace{-5em}}
 \maketitle

\vspace{-1em}
\begin{abstract}
This letter considers  a legacy massive multiple-input multiple-output (MIMO) network, in which spatial beams have been preconfigured for   near-field users, and  proposes  to use  the non-orthogonal multiple access (NOMA)     principle to serve  additional  far-field users  by  exploiting the  spatial beams preconfigured for the legacy near-field users.  Our results reveal that     the coexistence between near-field and far-field communications can be effectively supported via NOMA, and that the performance of NOMA-assisted massive MIMO   can be efficiently    improved by increasing the number of   antennas at the base station.  
\end{abstract}\vspace{-0.1em}

\begin{IEEEkeywords}
Non-orthogonal multiple access (NOMA), near-field communications, beamforming. 
\end{IEEEkeywords}

\vspace{-1em}
\section{Introduction}
  One recent advance  in non-orthogonal multiple access (NOMA) is  its use   as an add-on in a massive  multiple-input multiple-output (MIMO) based legacy space division multiple access (SDMA) network, where   spatial beams preconfigured for legacy users are used to serve additional   users  \cite{9693536}. As a result,
the connectivity and the overall system throughput of SDMA
can be improved in a low-complexity and spectrally efficient
manner. For conventional     SDMA networks based on  far-field communications, where the transceiver distance is larger than the Rayleigh distance \cite{7400949}, this application  of NOMA   is intuitive, as explained in the following.  Far-field  beamforming is based on   steering vectors, i.e., the spatial beams are  cone-shaped \cite{7561012}. In practice, each of these cone-shaped beams  can cover a large area, and it is intuitive to encourage multiple users which are inside of one cone-shaped area  to exploit  the same beam via NOMA.

Near-field communications have     received a lot of attention as,  for high carrier frequencies  and   large numbers of antennas, the Rayleigh distance becomes significantly large \cite{9184098,devar}. Unlike far-field communications,    the spherical-wave channel model has to be used for   near-field communications, which motivates the use of   beam-focusing, i.e., a beam is focused    on  not only a spatial direction   but also a specific location \cite{9738442}. As a result, a naturally arising   question is whether the preconfigured spatial beams in near-field communication networks can still be used to admit additional users in the same manner as   far-field beams, which is the motivation for this work.

This letter considers  a legacy near-field  SDMA  network, in which spatial beams have been preconfigured for legacy near-field users, and  proposes  to apply the principle of NOMA  to serve additional  far-field users by  exploiting  these preconfigured spatial beams. A resource allocation optimization problem is   formulated  to maximize the far-field users' sum data rate while guaranteeing the legacy near-field users' quality-of-service (QoS) requirements. First, a suboptimal low-complexity algorithm based on successive convex approximation (SCA) is   proposed to solve the problem, and then the optimal performance   is obtained for two special cases by applying the branch-and-bound (BB) algorithm \cite{5765556, beamallocation}.      Simulation results are presented to demonstrate that the use of NOMA can effectively support the coexistence of   near-field and far-field communications,  and the performance of NOMA assisted massive MIMO   can be efficiently    improved by increasing the number of   antennas at the base station.

\section{System Model}
 Consider a legacy  downlink near-field SDMA network, in which a   base station employs an $N$-antenna uniform linear array (ULA) and serves   $M$ single-antenna near-field users, where $M\leq N$. In this letter, it is assumed that $M$ near-field beamforming vectors, denoted by $\mathbf{p}_m$, have already been configured to  serve the   legacy users individually. The aim of this letter is to admit $K$ additional  far-field users based on these preconfigured spatial beams.   
  Denote the $2$-dimensional    coordinates  of the $m$-th near-field user,   the $k$-th far-field user, the center of the array, and   the $n$-th element of the array by  ${\boldsymbol \psi}^{\rm NF}_m$, ${\boldsymbol \psi}^{\rm FF}_k$, ${\boldsymbol \psi}_0$, and ${\boldsymbol \psi}_n$, respectively. According to the principle of near-field communications, $|{\boldsymbol \psi}^{\rm NF}_m-{\boldsymbol \psi}_0|< d_{R}(N)$, and $|{\boldsymbol \psi}^{\rm FF}_k-{\boldsymbol \psi}_0|> d_{R}(N)$, where  $d_{R}(N)=\frac{2((N-1)d)^2}{\lambda}$, $\lambda$, and  $d$ denote  the Rayleigh distance,        the wavelength, and    the antenna spacing of the ULA, respectively  \cite{9738442,chengxiang,9482501}.  We note that other types of antenna arrays, such as uniform planar arrays (UPAs)  and  uniform circular  arrays (UCAs), can also be used for supporting  near-field   communications  \cite{9184098, ldaix1}.

 \vspace{-1em}
 \subsection{Near-Field and Far-Field Channel Models}
The $m$-th legacy near-field user's observation  is given by $
 y_m = \mathbf{h}_m^H \mathbf{x}+n_m$,
 where  $\mathbf{x}$ denotes the signal vector sent by the base station, $n_m$ denotes the additive Gaussian noise with its power denoted by $\sigma^2$, $\mathbf{h}_m$ is based on the spherical-wave propagation model  \cite{devar,9738442,Eldar2}:
 \begin{align}\label{near}
 \mathbf{h}_m = \alpha_m\begin{bmatrix} 
 e^{-j\frac{2\pi }{\lambda}\left| {\boldsymbol \psi}^{\rm NF}_m -{\boldsymbol \psi}_1\right|} &\cdots &  e^{-j\frac{2\pi }{\lambda}\left| {\boldsymbol \psi}^{\rm NF}_m -{\boldsymbol \psi}_N\right|}
 \end{bmatrix}^T,
 \end{align} 
where $\alpha_m = \frac{c}{4\pi f_c \left| {\boldsymbol \psi}^{\rm NF}_m -{\boldsymbol \psi}_0\right|}$, $c$, and $f_c$ denote the free-space path loss,   the speed of light, and    the carrier frequency, respectively.  We note that       the line of sight (LoS) path is assumed to be available for the near-field users, as they are within the Rayleigh distance from the base station \cite{9738442,Eldar2}.  
 
The $k$-th far-field user receives the following signal: $
 z_k = \mathbf{g}_k^H \mathbf{x}+w_k$,
where $w_k$ denotes the additive Gaussian noise having  the same  power as $n_m$,   the conventional beamsteering vector is  used to model the far-field user's channel vector, $\mathbf{g}_k$, as follows: \cite{7400949}
 \begin{align}\nonumber
 \mathbf{g}_k =& \alpha_k e^{-j\frac{2\pi}{\lambda}  \left| {\boldsymbol \psi}^{\rm FF}_k -{\boldsymbol \psi}_1\right| }\\\label{far} &\times \begin{bmatrix}
1& e^{-j\frac{2\pi d}{\lambda}\sin \theta_k} &\cdots &  e^{-j\frac{2\pi d}{\lambda} (N-1)\sin \theta_k}
 \end{bmatrix}^T,
 \end{align}
 and $\theta_k$ denotes the conventional angle of departure.  Scheduling is assumed to be carried out to ensure that each participating    far-field user has an LoS connection to the base station. Because the LoS link is typically  $20$ dB stronger than the non-LoS links, only the LoS link is considered in \eqref{near} and \eqref{far} \cite{7279196}. 
 
 {\it Remark 1:} Comparing \eqref{near} to \eqref{far}, one can observe that the near-field channel model is fundamentally different from the far-field one. In particular, the channel vector in \eqref{far} is mainly parameterized by the angle of departure,  $\theta_k$, but the elements of the vector in \eqref{near}  depend on  the near-field user's specific location.

  \vspace{-1em}
 \subsection{Near-Field Beamforming and NOMA  Data Rates}
 For   illustrative  purposes,    full-digital near-field beamforming   based on the zero-forcing principle is adopted in this letter: $
 \mathbf{P}\triangleq \begin{bmatrix}
 \mathbf{p}_1&\cdots &\mathbf{p}_M
 \end{bmatrix} = \mathbf{H}\left( \mathbf{H}^H\mathbf{H}\right)^{-1}\mathbf{Q}$,
 where $\mathbf{H}=\begin{bmatrix}
 \mathbf{h}_1&\cdots &\mathbf{h}_M
 \end{bmatrix}$, and $\mathbf{Q}$ is an $M\times M$ diagonal matrix to ensure power normalization. In particular, the $i$-th element on the main diagonal of $\mathbf{Q}$ is given by $\left[\mathbf{Q}\right]_{i,i} = \left[\left( \mathbf{H}^H\mathbf{H}\right)^{-1}  \right]_{i,i}^{-\frac{1}{2}}$,  which ensures that   the beamforming vectors are normalized, i.e.,  $\mathbf{p}_m^H\mathbf{p}_m=1$, for all $m\in\{1, \cdots, M\}$. 
 
The NOMA principle is applied to ensure that each preconfigured spatial beam  is used as a type of bandwidth resource for serving additional far-field users, which means that the signal vector sent by the base station is given by 
  \begin{align}
 \mathbf{x} = &\sum^{M}_{m=1}\mathbf{p}_m\left(\sqrt{P_m}s_m^{\rm NF}+ \sum^{K}_{k=1} f_{m,k} s^{\rm FF}_k \right), %\\\nonumber=& \mathbf{P}\left(\mathbf{D}^{\frac{1}{2}}\mathbf{s}^{\rm NF} +  \sum^{K}_{k=1}\mathbf{f}_ks^{\rm FF}_k\right),
 \end{align}
 where $P_m$ is the  transmit power allocated to the $m$-th near-field user's signal, $f_{m,k}$ denotes the coefficient assigned to the $k$-th far-field user on beam $\mathbf{p}_m$, and $ {s}^{\rm NF}_m $ and $ {s}^{\rm FF}_k $ denote the signals for the near-field and far-field users, respectively.   If the $k$-th far-field user uses only a single beam, $f_{m,k}$ can be viewed as a power allocation coefficient. If multiple beams are used, $f_{m,k}$ can   be interpreted  as a beamforming coefficient. 
  
Therefore, the observation at the $m$-th near-field user can be written as follows:
  \begin{align}
 y_m = & 
\mathbf{h}^H_m \mathbf{p}_m\left(\sqrt{P_m}  {s}_m^{\rm NF} +  \sum^{K}_{k=1} f_{m,k} s^{\rm FF}_k \right)+n_m.
 \end{align}
For the considered NOMA network,   the near-field users have better channel conditions than the far-field users, and hence have the capability to carry out SIC.   To reduce the system complexity, assume that at most a single far-field user is scheduled on a near-field user's beam, and each of the $K$ far-field users utilize  $D_x$ beams, where $KD_x\leq M$. Denote   the subset collecting the indices of the beams used by the $k$-th far-field user by $\mathcal{S}_k$, where $|\mathcal{S}_k|=D_x$. 
 
Assume that   the $k$-th far-field user is active on $\mathbf{p}_m$. On the one hand, this far-field user's signal can be decoded by  the $m$-th near-field user with the following data rate: $
R_{m,k}^{\rm FF-NF} = \log\left(
1+ \frac{  |f_{m,k}|^2 h_m }{\sigma^2+P_mh_m}
\right)$,
where $h_m=|\mathbf{h}_m^H\mathbf{p}_m|^2$. 
If the first stage of SIC is successful, the near-field user can remove the far-field user's signal and decode its own signal with the following data rate: $
R_m^{\rm NF} = \log\left(
1+ \frac{P_m}{\sigma^2} h_m
\right)$.

On the other hand,    the $k$-th far-field user directly decodes its own signal with the following data rate: 
 \begin{align}\nonumber
 R^{\rm FF}_k = \log\left(1+
 | \tilde{\mathbf{g}}^H_k\mathbf{f}_k|^2 \gamma_k^{-1} 
  \right),
 \end{align}
 where $\gamma_k=\sigma^2+\sum^{M}_{m=1}P_m g_{m,k} +\underset{i\neq k}{\sum} | \tilde{\mathbf{g}}^H_k\mathbf{f}_i|^2$ $\tilde{\mathbf{g}}_k=\mathbf{P}^H\mathbf{g}_k$, $g_{m,k} = |\mathbf{g}_k^H\mathbf{p}_m|^2$, and $\mathbf{f}_k=\begin{bmatrix} f_{1,k}&\cdots f_{M,k}\end{bmatrix}^T$.

 The aim of this letter is to maximize the far-field users' sum data rate while guaranteeing the near-field users' QoS requirements, based on the following optimization problem:    
 \begin{problem}\label{pb:1} 
  \begin{alignat}{2}
\underset{P_m, f_{m,k} }{\rm{max}} &\quad     \sum^{K}_{k=1}
\min\left\{ R_k^{\rm FF},R_{m,k}^{\rm FF-NF}, m\in\mathcal{S}_k  \right\}  \label{1tst:1}
\\ s.t. &\quad R_m^{\rm NF} \geq R, \sum^{K}_{k=1} \mathbf{1}_{ f_{m,k}\neq 0}\leq 1,1\leq m \leq M  \label{1tst:2} \\
& \quad  P_m+  \sum^{K}_{k=1}|f_{m,k}|^2 \leq P, 1\leq m \leq M   \label{1tst:3}\\
&\quad P_m\geq 0, 1\leq m \leq M, \label{1tst:4}
  \end{alignat}
\end{problem} 
where $ \mathbf{1}_{x\neq0}$ denotes an indicator function, i.e., $ \mathbf{1}_{x\neq0}=1$ if $x\neq0$, otherwise $ \mathbf{1}_{x\neq0}=0$,  $R$ denotes the near-field users' target data rate, and $P$ denotes the transmit budget per beam.  We note that the resource allocation problem    in \eqref{pb:1} requires the base station to have access to the users' channel state information (CSI). This CSI assumption    can be realized by  asking each user to first carry out channel estimation based on the pilots broadcast by the base station and then feed back its CSI to the base station via a reliable control channel.

{\it Remark 2:} The objective function in \eqref{1tst:1} indicates that   the $k$-th far-field user prefers to use a beam on which both $h_m$ and $g_{m,k}$ are strong. Therefore, to find $\mathcal{S}_k$ and remove the indicator function in  \eqref{1tst:1},  a simple sub-optimal scheduling scheme can be used first,    where the far-field users are successively asked    to select the best $D_x$ beams based on the following criterion: ${\arg}~\underset{m}{ \max}~ \min\left\{\frac{h_m}{\max\{h_1, \cdots,h_M\}}, \frac{g_{m,k}}{\max\{g_{1,k},\cdots,g_{M,k}\}} \right\}$. Here, the channel gains, $h_m$ and $g_{m,k}$, are normalized to ensure that they are in the same order of magnitude. As a result,   $f_{m,k}=0$ for $m\not \in \mathcal{S}_k $, and only $f_{m,k}$,  $m  \in \mathcal{S}_k $, need to be optimized.   We note that    optimal scheduling is possible   by applying  integer programming  as in \cite{beamallocation}. However, this  is out of the scope of this paper due to the space limitations.
\vspace{-1em}

\section{Proposed Resource Allocation Algorithms} \vspace{-0em}
\subsection{SCA-based  Resource Allocation}
In this section, the general case with $K\geq 1$ and $D_x\geq 1$ is considered first. Problem \eqref{pb:1}  can be first recast as follows: 
{\small  \begin{problem}\label{pb:2} 
  \begin{alignat}{2}
\underset{P_m,x_k, f_{m,k} }{\rm{max}}  &\quad     \sum^{K}_{k=1}
\log(1+x_k) \label{2st:0}
\\ s.t. &\quad  
 | \tilde{\mathbf{g}}^H_k\mathbf{f}_k|^2 \gamma_k^{-1} \geq x_k\geq0,  \quad 1\leq k \leq K\label{2st:1}
\\&\quad \frac{  |f_{m,k}|^2 h_m }{\sigma^2+P_mh_m} \geq x_k, m\in\mathcal{S}_k,1\leq k \leq K   \label{2st:2}
\\ &\quad f_{m,k}=0, m\not\in \mathcal{S}_k, 1\leq k \leq K, \label{2st:00}
\\ &\quad P_m \geq \frac{\sigma^2\epsilon}{h_m}, 1\leq m \leq M, \label{2st:3} \\
& \quad  \eqref{1tst:3}, \eqref{1tst:4},
  \end{alignat}
\end{problem} }
\hspace{-0.5em}where $\epsilon=2^R-1$, and constraint \eqref{2st:00} is due to the use of the scheduling scheme discussed in   Remark 3.  
Because \eqref{2st:1} and \eqref{2st:2} are decreasing functions of $P_m$,  it is straightforward to show that the optimal solution of $P_m$ is  $P_m^*=\min\left\{\frac{\sigma^2\epsilon}{h_m},P \right\}$. We   note that the main challenges involving   problem \eqref{pb:2} are  the non-convex constraints in \eqref{2st:1} and \eqref{2st:2}, which motivates the use of SCA. To facilitate the application of SCA, problem \eqref{pb:2} can be first recast as follows: 
{\small \begin{problem}\label{pb:4} 
  \begin{alignat}{2}
\underset{ x_k, f_{m,k} }{\rm{max}}  &\quad     \sum^{K}_{k=1}
\log(1+x_k) \label{4st:0}
\\ s.t. &\quad  \eta_k +\underset{i\neq k}{\sum} | \tilde{\mathbf{g}}^H_k\mathbf{f}_i|^2\leq 
 \frac{ | \tilde{\mathbf{g}}^H_k\mathbf{f}_k|^2}
 {x_k}  ,1\leq k \leq K\label{4st:1}
\\&\quad      x_k \mu_m\leq  |f_{m,k}|^2, m\in\mathcal{S}_k, 1\leq k \leq K    \label{4st:2} \\
& \quad    \sum^{K}_{k=1}|f_{m,k}|^2 \leq P-P_m^*, 1\leq m \leq M, x_k\geq 0   \label{3st:3}\\&\quad  \eqref{2st:00}\nonumber,
  \end{alignat}
\end{problem} }
\hspace{-0.5em}where $\eta_k=\sigma^2+\sum^{M}_{m=1}P_m^* g_{m,k} $ and $\mu_m =\frac{\sigma^2+P_m^*h_m}{h_m}, m\in\mathcal{S}_k $. The term
$ \frac{ | \tilde{\mathbf{g}}^H_k\mathbf{f}_k|^2}
 {x_k}$ can be approximated as an affine function  via the Taylor expansion. In particular,  first express $ | \tilde{\mathbf{g}}^H_k\mathbf{f}_k|^2$ as follows:
 \begin{align}
  | \tilde{\mathbf{g}}^H_k\mathbf{f}_k|^2 = \bar{\mathbf{f}}_k^T
  \left(\hat{\mathbf{g}}_k\hat{\mathbf{g}}_k^T+\check{\mathbf{g}}_k\check{\mathbf{g}}_k^T\right) \bar{\mathbf{f}}_k,
 \end{align}
 where $(\cdot)^T$ denote the transpose,   $ \bar{\mathbf{f}}_k=\begin{bmatrix}{\rm Re}(  {\mathbf{f}}_k)^T &{\rm Im}(  {\mathbf{f}}_k)^T\end{bmatrix}^T$, $\hat{\mathbf{g}}_k=\begin{bmatrix}{\rm Re}(\hat{\mathbf{g}}_k)^T &{\rm Im}(\hat{\mathbf{g}}_k)^T\end{bmatrix}^T$, $\check{\mathbf{g}}_k=\begin{bmatrix}-{\rm Im}(\hat{\mathbf{g}}_k)^T &{\rm Re}(\hat{\mathbf{g}}_k)^T\end{bmatrix}^T$. By  building the   real-valued vector, $\tilde{\mathbf{f}}_k=\begin{bmatrix}
  \bar{\mathbf{f}}_k^T& x_k
  \end{bmatrix}^T$, and applying   the first order Taylor expansion, $ \frac{ | \tilde{\mathbf{g}}^H_k\mathbf{f}_k|^2}
 {x_k}$ can be approximated as follows: 
\begin{align}
 \frac{ | \tilde{\mathbf{g}}^H_k\mathbf{f}_k|^2}
 {x_k} \approx  \frac{ | \tilde{\mathbf{g}}^H_k\mathbf{f}_k^0|^2}
 {x_k^0} + 
 \triangledown^T_k|_{\tilde{\mathbf{f}}_k=\tilde{\mathbf{f}}_k^0}  \left(\tilde{\mathbf{f}}_k-\tilde{\mathbf{f}}^0_k \right),
\end{align}
where $\tilde{\mathbf{f}}^0_k=\begin{bmatrix}{\rm Re}( {\mathbf{f}}_k^0)^T &{\rm Im}(  {\mathbf{f}}_k^0)^T&x_k^0\end{bmatrix}^T $ denotes the initial value  of $\tilde{\mathbf{f}}_k$, and   $\triangledown_k$ is given by
\begin{align}
\triangledown_k = \begin{bmatrix}
 \frac{2 \bar{\mathbf{f}}_k ^T\left(\hat{\mathbf{g}}_k\hat{\mathbf{g}}_k^T+\check{\mathbf{g}}_k\check{\mathbf{g}}_k^T\right) ^T}
 {x_k} & -\frac{ | \tilde{\mathbf{g}}^H_k\mathbf{f}_k|^2}
 {x_k^2}
\end{bmatrix}^T.
  \end{align}  
Constraint \eqref{4st:2} can be similarly approximated by using an initial value $f_{m,k}^0$.

Therefore,  problem \eqref{pb:4} can be approximated as follows: 
  {\small  \begin{problem}\label{pb:5} 
  \begin{alignat}{2}
\underset{ x_k, f_{m,k} }{\rm{max}} &\quad     \sum^{K}_{k=1}
\log(1+x_k) \label{5st:0}
\\ s.t. &\quad  \eta_k +\underset{i\neq k}{\sum} | \tilde{\mathbf{g}}^H_k\mathbf{f}_i|^2\leq 
 \frac{ | \tilde{\mathbf{g}}^H_k\mathbf{f}_k^0|^2}
 {x_k^0} + 
 \triangledown^T_k|_{\tilde{\mathbf{f}}_k=\tilde{\mathbf{f}}_k^0}  \left(\tilde{\mathbf{f}}_k-\tilde{\mathbf{f}}^0_k \right),\nonumber \\
 &\quad    \quad\quad\quad \quad\quad\quad\quad\quad\quad\quad\quad\quad\quad 1\leq k \leq K\label{5st:1}
\\&\quad      x_k \mu_m\leq  |f^0_{m,k}|^2 +4{\rm Re}\left\{ (f_{m,k}^0)^H (f_{m,k}-f^0_{m,k})\right\}, \nonumber\\
&\quad\quad\quad\quad\quad  \quad\quad\quad\quad\quad\quad m\in\mathcal{S}_k , 1\leq k \leq K   \label{5st:2} \\
& \quad   \eqref{2st:00}, \eqref{3st:3}\nonumber,
  \end{alignat}
\end{problem}} 
\hspace{-0.5em}which is   a convex optimization problem and can be straightforwardly solved by convex solvers.

The implementation of SCA requires that  $\tilde{\mathbf{f}}_k^0$ is a feasible solution of problem \eqref{pb:4}, and $\tilde{\mathbf{f}}_k^0$  can be obtained as follows. First,  by using  \eqref{3st:3}, we choose $f_{m,k}^0=0$ for $m\not\in \mathcal{S}_k$, and  $f_{m,k}^0=(P-P^*_m)  
 $ for $m\in \mathcal{S}_k$, which means that the following choices of $x_k^0$, $1\leq k \leq K$, are feasible: 
\begin{align}
\label{greedy}
x_k^0=\min\left\{\frac{ | \tilde{\mathbf{g}}^H_k\mathbf{f}_k^0|^2}{ \eta_k +\underset{i\neq k}{\sum} | \tilde{\mathbf{g}}^H_k\mathbf{f}_i^0|^2}, \frac{ |f_{m,k}^0|^2}{\mu_m}, m\in\mathcal{S}_k 
\right\}.
\end{align}  Based on $\tilde{\mathbf{f}}_k^0$, SCA can be applied  in an iterative manner, i.e., by using $\tilde{\mathbf{f}}_k^0$ and solving problem \eqref{pb:5}, a new estimate of $\tilde{\mathbf{f}}_k$ can be generated and used to replace $\tilde{\mathbf{f}}_k^0$ in the next iteration. In general, SCA can only converge to a stationary point, which cannot be guaranteed to be the optimal solution. Motivated by this, the optimal performance of NOMA transmission is studied for the two special cases in the following subsection.

   \vspace{-1em}
  
  \subsection{Optimal Performance in Two Special Cases}
  \subsubsection{Special case with $K=1$}\label{subsection1} If there is a single far-field user, i.e., $K=1$, problem \eqref{pb:1} can be simplified as follows:\footnote{The subscript, $k$, is omitted since there is a single far-field user.}
 \begin{problem}\label{pb:6} 
  \begin{alignat}{2}
\underset{ x, f_{m} }{\rm{max}} &\quad    x
 \label{6tst:1}
\\ s.t. &\quad 
R^{\rm FF}\geq x, x\geq0,R_m^{\rm NF} \geq R, 1\leq m \leq M \label{6st:1} \\
& \quad R_m^{\rm FF-NF}\geq x,  m\in\mathcal{S}, f_m=0, m\not\in \mathcal{S} \label{6st:2}\\
& \quad    P_m+  |f_m|^2 \leq P,  P_m\geq 0, 1\leq m \leq M\label{6st:4}. 
  \end{alignat}
\end{problem} 

By following   steps similar to those in the previous subsection, problem \eqref{pb:6} can be recast as follows:
 {  \begin{problem}\label{pb:7} 
  \begin{alignat}{2}
\underset{ y, f_{m} }{\rm{max}}&\quad    y
 \label{7tst:1}
\\ s.t. &\quad 
 | \mathbf{g}^H\mathbf{P}\mathbf{f}|^2 
 \geq \eta_0 y,  y\geq 0\label{7st:2} \\
& \quad  f_m^2 h_m  \geq \eta_m y, m\in \mathcal{S},f_m=0, m\not\in \mathcal{S} \label{7st:3}\\ &\quad  |f_m|^2\leq P-  P_m^* , 1\leq m \leq M ,   \label{7st:4}
  \end{alignat}
\end{problem}} 
\hspace{-0.5em}where $y=2^x-1$, $P_m^* = \frac{\sigma^2\epsilon}{h_m}$, $\eta_0=\sigma^2+\sum^{M}_{m=1}P_m^* g_m$, and  $ \eta_m=\sigma^2+P_m^*h_m$. While problem \eqref{pb:7} is not convex, it can be solved by using the following lemma.

\begin{lemma}\label{lemma1}
The optimal value  of problem   \eqref{pb:7}  is the same as that of the following optimization problem: 
 {\small \begin{problem}\label{pb:8} 
  \begin{alignat}{2}
\underset{ y, f_{m} }{\rm{max}}&\quad    y
 \label{8tst:1}
\\ s.t. &\quad 
 \left(\sum^{M}_{m=1} g_m^{\frac{1}{2}} |f_m|\right)^2 
 \geq \eta_0 y, y\geq 0, \label{8tst:2}   \eqref{7st:3},  \eqref{7st:4}, 
  \end{alignat}
\end{problem} }
\hspace{-0.5em}where $g_m = |\mathbf{g}^H\mathbf{p}_m|^2$. 
\end{lemma}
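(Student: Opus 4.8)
The plan is to prove the two optimal values coincide by a two-sided inequality, exploiting the identity $\mathbf{g}^H\mathbf{P}\mathbf{f}=\sum_{m=1}^{M}(\mathbf{g}^H\mathbf{p}_m)f_m$. Under this identity, problems \eqref{pb:7} and \eqref{pb:8} differ only in how the useful far-field signal power is written: \eqref{pb:7} uses the coherent combination $\left|\sum_{m}(\mathbf{g}^H\mathbf{p}_m)f_m\right|^2$, whereas \eqref{pb:8} uses the magnitude-only quantity $\left(\sum_{m}|\mathbf{g}^H\mathbf{p}_m|\,|f_m|\right)^2=\left(\sum_{m}g_m^{1/2}|f_m|\right)^2$. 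All remaining constraints, namely \eqref{7st:3} and \eqref{7st:4} together with $y\geq0$, are common to both problems and involve the coefficients only through their magnitudes $|f_m|$.

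First I would show that \eqref{pb:8} is a relaxation of \eqref{pb:7}, so that its optimal value is no smaller. Indeed, for any $(y,\{f_m\})$ feasible for \eqref{pb:7}, the triangle inequality gives $\left|\mathbf{g}^H\mathbf{P}\mathbf{f}\right|=\left|\sum_{m}(\mathbf{g}^H\mathbf{p}_m)f_m\right|\le\sum_{m}g_m^{1/2}|f_m|$, hence $\left(\sum_{m}g_m^{1/2}|f_m|\right)^2\ge\left|\mathbf{g}^H\mathbf{P}\mathbf{f}\right|^2\ge\eta_0 y$, so \eqref{8tst:2} holds; the constraints \eqref{7st:3} and \eqref{7st:4} are literally unchanged. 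Thus the point is feasible for \eqref{pb:8} with the same objective value $y$, and therefore the optimum of \eqref{pb:8} is at least that of \eqref{pb:7}.

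For the reverse inequality, I would take any $(y,\{f_m\})$ feasible for \eqref{pb:8} and phase-align the coefficients so that the beams combine coherently at the far-field user. Writing $\mathbf{g}^H\mathbf{p}_m=g_m^{1/2}e^{j\phi_m}$, set $f_m'=|f_m|e^{-j\phi_m}$ for all $m$; note $f_m'=0$ whenever $f_m=0$, in particular for $m\notin\mathcal{S}$. Then $(\mathbf{g}^H\mathbf{p}_m)f_m'=g_m^{1/2}|f_m|\ge0$ for every $m$, so that $\left|\mathbf{g}^H\mathbf{P}\mathbf{f}'\right|=\sum_{m}g_m^{1/2}|f_m|$ and hence $\left|\mathbf{g}^H\mathbf{P}\mathbf{f}'\right|^2=\left(\sum_{m}g_m^{1/2}|f_m|\right)^2\ge\eta_0 y$, i.e.\ \eqref{7st:2} is satisfied. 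Since $|f_m'|=|f_m|$, constraints \eqref{7st:3} and \eqref{7st:4} are preserved verbatim, so $(y,\{f_m'\})$ is feasible for \eqref{pb:7} with the same objective. Combining the two directions yields equality of the optimal values.

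There is no real obstacle here; the only step that deserves an explicit line is verifying that the phase rotation in the second part leaves every other constraint untouched, which is exactly why it works: \eqref{7st:3} and \eqref{7st:4} depend on $f_m$ only through $|f_m|$. This is also the payoff of the reformulation --- problem \eqref{pb:8} has effectively removed the phase degrees of freedom and reduced the task to an optimization over the nonnegative magnitudes $\{|f_m|\}_{m\in\mathcal{S}}$, which is the structure exploited by the branch-and-bound solution that follows.
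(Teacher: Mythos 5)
Your proposal is correct and follows essentially the same route as the paper's own proof: the triangle inequality shows \eqref{pb:8} is a relaxation of \eqref{pb:7}, and phase-aligning each $f_m$ with the argument of $\mathbf{g}^H\mathbf{p}_m$ converts any feasible point of \eqref{pb:8} back into one of \eqref{pb:7} with the same objective, since the remaining constraints depend only on $|f_m|$. The only cosmetic difference is that you phase-align an arbitrary feasible point whereas the paper applies this step only to the optimizer of \eqref{pb:8}; both yield the same conclusion.
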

\begin{proof}
Denote a feasible solution of problem \eqref{pb:7} by $(y^0, \mathbf{f}^0\triangleq \begin{bmatrix} f_1^0 & \cdots &f_M^0\end{bmatrix}^T)$. Constraint \eqref{7st:2} ensures that $| \mathbf{g}^H\mathbf{P}\mathbf{f}^0|^2 
 \geq \eta_0 y^0$.  Because  $\sum^{M}_{m=1} g_m^{\frac{1}{2}} |f_m^0|\geq | \mathbf{g}^H\mathbf{P}\mathbf{f}^0| $, $ \left(\sum^{M}_{m=1} g_m^{\frac{1}{2}} |f_m^0|\right)^2 
 \geq \eta_0 y^0$, which means that any feasible solution of problem   \eqref{pb:7} is also feasible for problem \eqref{pb:8}. In other words, the feasible set of problem \eqref{pb:7} is a subset of that  of problem \eqref{pb:8}, and  hence the optimal value of     problem \eqref{pb:8} is no less than that of   problem of \eqref{pb:7}.  We also note that   the optimal solution of  problem \eqref{pb:8}  leads to    a feasible solution of problem \eqref{pb:7}. For example,   assume that $(y^*, \mathbf{f}^*\triangleq \begin{bmatrix} f_1^* & \cdots &f_M^*\end{bmatrix}^T)$ is the optimal solution of  problem \eqref{pb:8}.   $(y^*,\tilde{\mathbf{f}}^* \triangleq  \begin{bmatrix}|f_1^*|e^{-j\bar{\theta}_1}&\cdots &|f_M^*|e^{-j\bar{\theta}_M} \end{bmatrix}^T)$  must be feasible to   problem \eqref{pb:7}, where $\bar{\theta}_m$ is the argument of the complex-valued number $\mathbf{g}^H\mathbf{p}_m$. By using the fact that     the optimal value of     problem \eqref{pb:8} is no less than that of   problem of \eqref{pb:7}, $y^*$ must be the optimal value of problem \eqref{pb:7} as well. Therefore, problems \eqref{pb:7} and \eqref{pb:8} have the same optimal value, and the proof is complete. 
\end{proof}
By using Lemma \ref{lemma1}, the optimal performance of NOMA transmission in the special case with $K=1$   can be obtained by defining  $z_m = |f_m|^2$ and transferring problem \eqref{pb:8}   into the following equivalent convex form:
{\small  \begin{problem}\label{pb:9} 
  \begin{alignat}{2}
\underset{ y, z_{m} }{\rm{max}}&\quad    y
 \label{9tst:1}
\\ s.t. &\quad 
  \left(\sum^{M}_{m=1} g_m^{\frac{1}{2}} z_m^{\frac{1}{2}}\right)^2 
 \geq \eta_0 y, y\geq 0, z_m h_m  \geq \eta_m y, m\in\mathcal{S}\nonumber \\
& \quad  0\leq z_m\leq P-  P_m^* , 1\leq m \leq M, z_m=0, m\not\in \mathcal{S}    .\nonumber
  \end{alignat}
\end{problem} }

%Once solved, the data rate is given
%\begin{align}
%R = \log\left(1+ \frac{ \left(\sum^{M}_{m=1} g_m^{\frac{1}{2}} x_m^{\frac{1}{2}}\right)^2 }{\eta_0} \right)
%\end{align}

 \begin{algorithm}[t]
\caption{Branch and Bound Algorithm}

 \begin{algorithmic}[1] 
 
\State Set $\bar{\mathcal{S}}_0=\{\mathcal{B}_0\}$ and   tolerance $\epsilon$,  $i=0$,  $\beta^u_0=\phi^{\rm up}(\mathcal{B}_0)$, $\beta^l_0=\phi^{\rm lb}(\mathcal{B}_0)$, and $\delta=\beta^u_0-\beta^l_0$
\While { $\delta\geq \epsilon$ }
\State $i=i+1$
\State  Find $\mathcal{B}\in\bar{\mathcal{S}}_{i-1}$ with the criterion: $\min \phi^{\rm lb}(\mathcal{B})$

\State Split $\mathcal{B}$ along its longest edge into $\mathcal{B}_{1}$ and $\mathcal{B}_2$

\State Construct $\bar{\mathcal{S}}_i=\{\mathcal{B}_1\cup \mathcal{B}_2 \cup (\bar{\mathcal{S}}_{i-1}\backslash \mathcal{B}) \}$

\State Update the   upper and lower bounds $\beta^u_i=\max \phi^{\rm up}(\mathcal{B})$ and $\beta^l_i=\max \phi^{\rm lb}(\mathcal{D})$, $\forall \mathcal{B}\in \bar{\mathcal{S}}_i$.  

\State  $\delta=\beta^u_i-\beta^l_i$

\State Prune    $\mathcal{B}$ with upper  bounds   smaller than $\beta^l_i$.

 \EndWhile
\State \textbf{end}
 \end{algorithmic}\label{algorithm}
\end{algorithm}

  \subsubsection{Special case with $D_x=1$}\label{subsection2} If each far-field user uses a single beam,  problem \eqref{pb:1} can be simplified as follows: 
 {  \begin{problem}\label{pb:10} 
  \begin{alignat}{2}
\underset{ x_k, f_{k}}{\rm{max}} &\quad     \sum^{K}_{k=1}
\log(1+x_k) \label{10st:0}
\\ s.t. &\quad  \eta_k +\underset{i\neq k}{\sum} | \tilde{ {g}}_k|^2 | {f}_i|^2\leq 
 \frac{ | \tilde{ {g}}_k|^2| {f}_k|^2}
 {x_k}  , 1\leq k \leq K \label{10st:1} \\&\quad   |f_{k}|^2  \geq x_{k} \frac{\sigma^2+P_{m_k}^*h_{m_k}}{h_{m_k}} , 1\leq k\leq K  \label{10st:2} \\
& \quad    |f_{k}|^2 \leq P-P_{m_k}^*, 1\leq m \leq M    \label{10st:3},
  \end{alignat}
\end{problem} }
\hspace{-0.5em}where $m_k$ denotes the index of the beam used by the $k$-th far-field user, and $f_{m,k}$ is simplified to $f_k$ for this special case. 

  \begin{figure}[t] \vspace{-1em}
\begin{center}
\subfigure[$N=64$ ]{\label{fig1a}\includegraphics[width=0.3\textwidth]{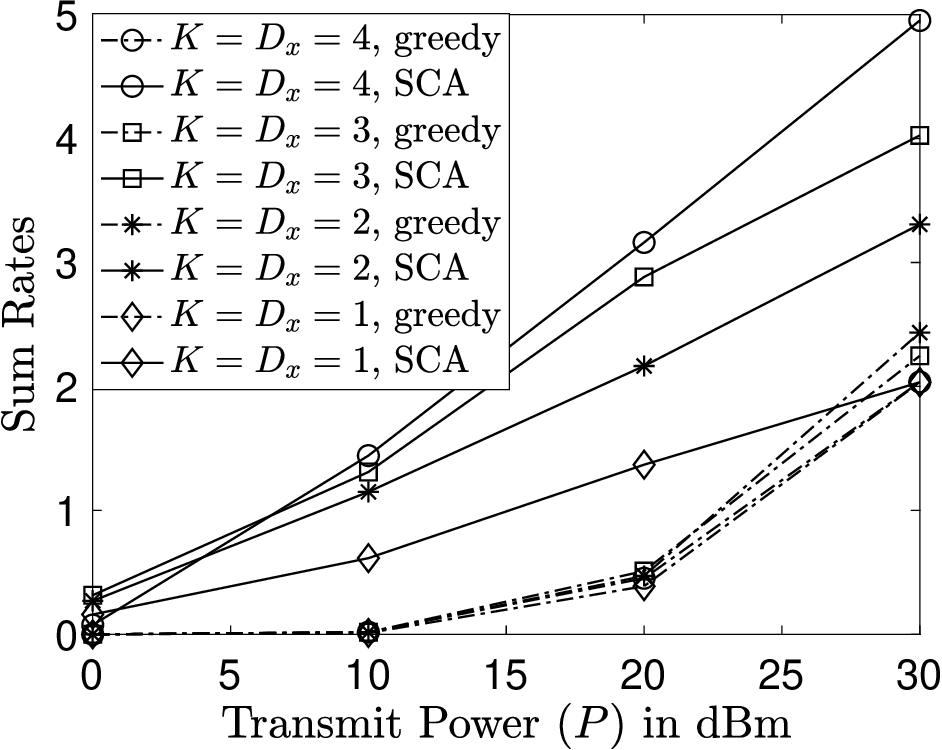}}\hspace{2em}
\subfigure[$N=128$]{\label{fig1b}\includegraphics[width=0.3\textwidth]{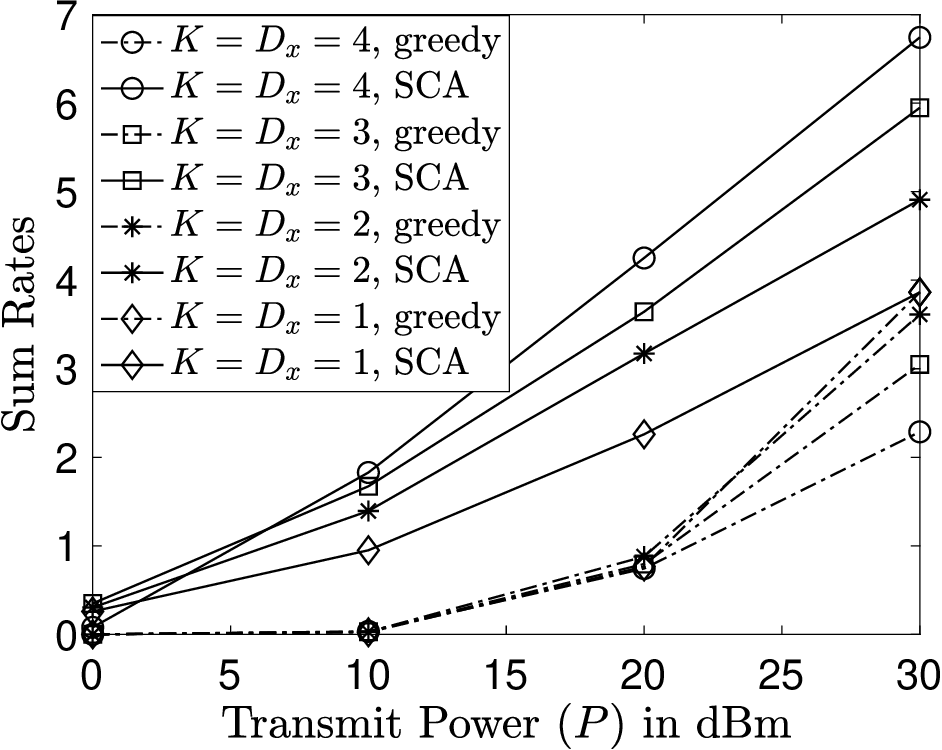}} \vspace{-1em}
\end{center}
\caption{ Far-field users' sum data rates achieved by NOMA with randomly located users. The greedy benchmarking scheme is  based on \eqref{greedy}.     \vspace{-1em} }\label{fig1}\vspace{-1em}
\end{figure}

For this special case,   problem \eqref{pb:10} is similar to conventional power allocation in interference channels, where the BB algorithm can be used to obtain the optimal solution \cite{5765556, beamallocation}. 
Due to the space limitations, the  principle of the BB algorithm is described only briefly in the following. As shown in Algorithm \ref{algorithm},  the initilization of the algorithm   builds   an initial   box,  denoted by $\mathcal{B}_0$, by using an upper bound on $x_k$ as follows: $
x_k \leq  \min\left\{ \frac{| \tilde{ {g}}_k|^2(P-P^*_m) }{\eta_k},
\frac{P-P^*_m}{\mu_m}, m\in \mathcal{S}_k
\right\}$.
 
In each iteration of the BB algorithm, the key step is to calculate the upper and lower bounds for a  box,  $\mathcal{B}$, which are denoted by $ \phi^{\rm up}(\mathcal{B})$ and $ \phi^{\rm lb}(\mathcal{B})$, respectively. 
Further denote  the minimum and maximum  vertices of    $\mathcal{B}$ by $\mathbf{x}_{\max}$ and $\mathbf{x}_{\min}$, respectively.   $ \phi^{\rm up}(\mathcal{B})=\sum^{K}_{k=1}
\log(1+x_{k,\max}) $ and $ \phi^{\rm lb}(\mathcal{B})\sum^{K}_{k=1}
\log(1+x_{k,\min}) $, if  $\mathbf{x}_{\min}$ is a   solution of the following feasibility optimization  problem:
 \begin{problem}\label{pb:11} 
  \begin{alignat}{2}
\underset{   f_{k} }{\rm{max}}&\quad    1 \label{11st:0}
\\ s.t. &\quad  \eta_k +\underset{i\neq k}{\sum} | \tilde{ {g}}_k|^2 | {f}_i|^2\leq 
 \frac{ | \tilde{ {g}}_k|^2| {f}_k|^2}
 {x_{k,\min}}  , 1\leq k \leq K \label{10st:1} \\&\quad   |f_{k}|^2  \geq x_{k,\min} \frac{\sigma^2+P_{m_k}^*h_{m_k}}{h_{m_k}} , 1\leq k\leq K  \label{11st:2} \\
& \quad    |f_{k}|^2 \leq P-P_{m_k}^*, 1\leq m \leq M    \label{11st:3},
  \end{alignat}
\end{problem} 
where $x_{k,\max}$ and $x_{k,\min}$ are the $k$-th elements of $\mathbf{x}_{\max}$ and $\mathbf{x}_{\min}$, respectively.
If $\mathbf{x}_{\min}$ is not feasible, the upper and lower bounds are set as $0$. The details for implementing the BB algorithm can be found in \cite{beamallocation}. 
\vspace{-1em}
  \section{Simulation Results}
In this section, simulation results are presented to evaluate the performance of the proposed NOMA scheme. For all  simulations we used, $f_c=28$ GHz, $d=\frac{\lambda}{2}$,  $R=0.1$ bits per channel use, $\sigma^2=-80$ dBm,   and  $M=36$ \cite{9738442}. The ULA is placed on the vertical coordinate  axis and ${\boldsymbol \psi}_0=(0,0)$.

In Fig. \ref{fig1}, the performance of NOMA transmission is evaluated with   randomly located users. In particular, the near-field users are uniformly located inside of a half-ring  with its inner and outer radii being $5$ m and $d_{R}(64)$ m, respectively.   The far-field users are uniformly located inside of a half-ring with its inner and outer radii being $d_{R}(128)$ m and $(d_{R}(128)+10)$ m, respectively.  A greedy resource allocation scheme based on \eqref{greedy} is used as a benchmarking scheme in the figure.  As can be seen from  Fig. \ref{fig1}, the use of NOMA   ensures that the  spatial beams preconfigured for the near-field users  are   efficiently utilized  to support additional far-field users.     Comparing Fig. \ref{fig1a} to Fig. \ref{fig1b}, one can also observe that the use of more antennas at the base station can further improve the performance of NOMA, which indicates the importance of NOMA for massive MIMO. 
  \begin{figure}[t]\centering \vspace{-2.5em}
    \epsfig{file=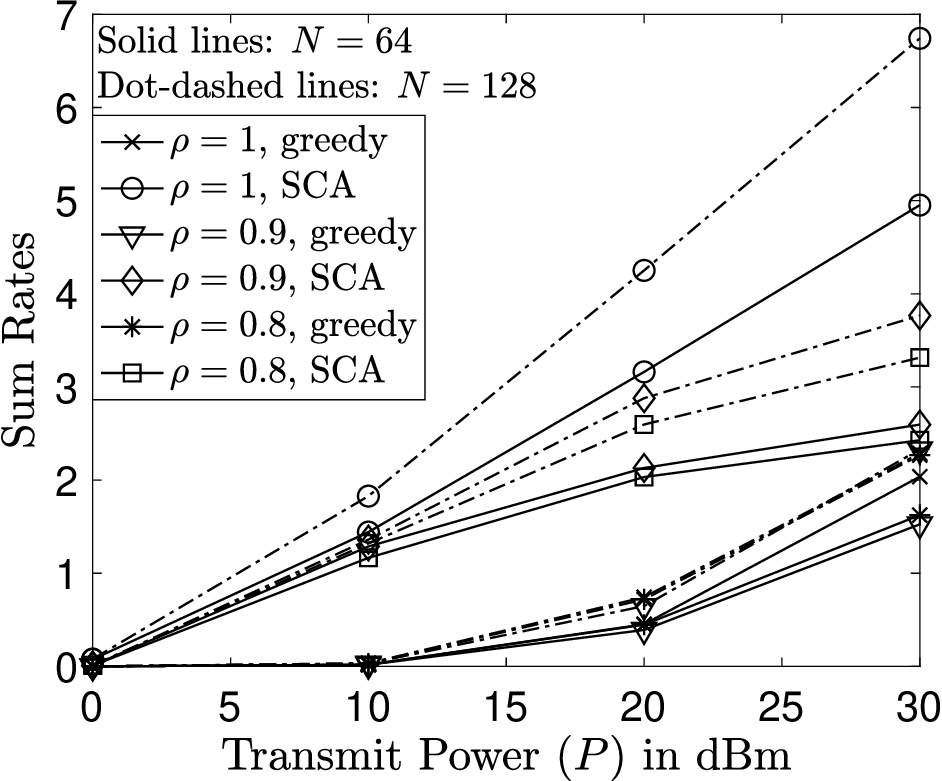, width=0.33\textwidth, clip=}\vspace{-0.5em}
\caption{ Illustration of the impact of imperfect CSI. $K=D_x=4$.   \vspace{-1em}    }\label{fig6}   \vspace{-2em} 
\end{figure}
 In Fig. \ref{fig6}, the impact of imperfect CSI on NOMA is studied. We assume that the perfect CSI of the legacy near-field users  is   available at the base station, but there exist CSI  errors for the far-field users. The estimated CSI is modelled as  $\hat{\mathbf{g}}_k=\rho\mathbf{g}_k+ \sqrt{1-\rho}\mathbf{e}_k$, where   $\mathbf{e}_k$ denotes the CSI errors     due to imperfect CSI feedback  or     abrupt changes in    the users' channels, $\rho\in[0,1]$ denotes a parameter to measure the quality of the CSI, and   $\mathbf{e}_k$ follows a complex Gaussian distribution with zero mean and variance $\sigma^2_e=\alpha_k^2$ \cite{8901184}.   Fig. \ref{fig6} shows that  the performance of NOMA is degraded   by  imperfect CSI, particularly for large $\rho$. Fig. \ref{fig6} also shows that the proposed  SCA scheme    still outperforms the baseline, even in the presence of imperfect CSI.  We note that robust beamforming can be used to combat the detrimental effects of  imperfect CSI, which is beyond the scope of this paper.  

As shown in Sections \ref{subsection1} and \ref{subsection2}, the optimal performance of NOMA transmission can be obtained for the two special cases, which are studied in Fig. \ref{fig2} by focusing the following  deterministic case.   On the one hand, assume that  the $M=36$ near-field users are equally spaced with $\frac{10}{\sqrt{M}}$ m distance, and located within a square with its center  located at $(0,9)$ m.  On the other hand, assume that the far-field users are also equally spaced and located on a half-circle with radius $90$ meters.   Fig. \ref{fig2a} focuses on the scenario with    a single scheduled far-field user, and   Fig. \ref{fig2b} focuses on the scenario with a single available beam. Fig. \ref{fig2}   shows that the far-field user's performance can be improved by using more beams, and there is an optimal choice of $K$ for   sum-rate maximization.   Fig. \ref{fig2} also shows that       SCA   provides a reasonable estimate for the optimal performance.
  \begin{figure}[t] \vspace{-2.5em}
\begin{center}
\subfigure[$K=1$ ]{\label{fig2a}\includegraphics[width=0.33\textwidth]{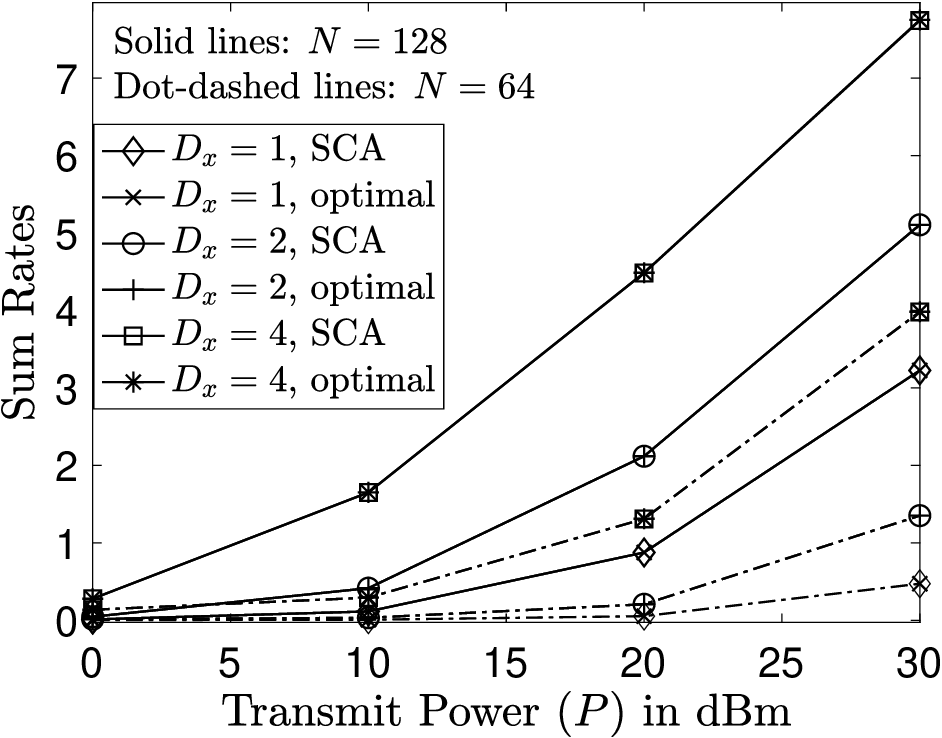}}\hspace{2em}
\subfigure[$D_x=1$]{\label{fig2b}\includegraphics[width=0.33\textwidth]{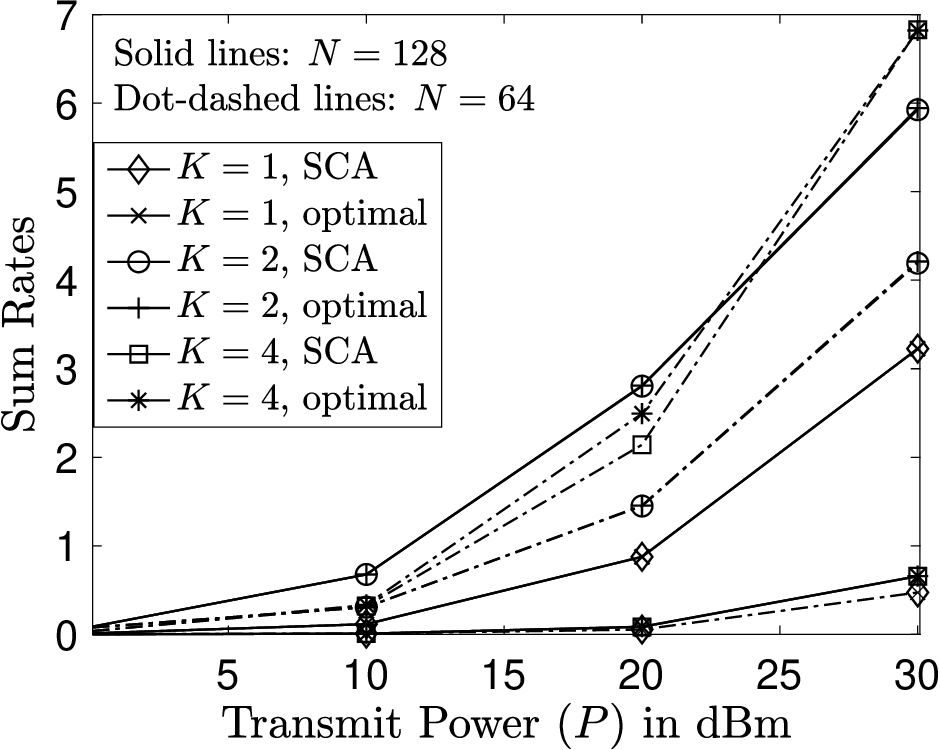}} \vspace{-1em}
\end{center}
\caption{ Deterministic studies for the optimality    of         NOMA transmission.        \vspace{-1em} }\label{fig2}\vspace{-2em}
\end{figure}\vspace{-1.5em}
\section{Conclusions}
This letter has considered  a legacy network, where spatial beams have been preconfigured for legacy near-field users, and shown  that via NOMA,   additional  far-field users can be efficiently served by  using these preconfigured    beams.  In this letter, each user was assumed to have a single antenna. An important direction for future research is to study the design of NOMA assisted transmission for users equipped with multiple antennas. Furthermore, perfect SIC has been assumed. The study of the impact of imperfect SIC on NOMA constitutes   another important direction for future research.  \vspace{-2em}
\bibliographystyle{IEEEtran}
\bibliography{IEEEfull,trasfer}
  \end{document}